\definecolor{chromaygb3}{rgb}{0.407843,0.776471,0.560784}
\definecolor{codeshadingcolour}{rgb}{.9,.9,.9}
\algnewcommand{\LeftComment}[1]{\(\triangleright\) #1}
\newcommand{\AlgVar}[1]{\mathit{#1}}
\newcommand\Edge{\text{--}}
\newcommand\TDGeneral{\FuncSty{elimination\_forest}}
\newcommand\TDConnected{\FuncSty{elimination\_tree}}
\newcommand\TDOptimise{\FuncSty{optimise}}
\newcommand{\codelineref}[1]{line~\ref{#1}}
\newcommand{\linerangeref}[2]{lines~\ref{#1} to~\ref{#2}}
\title{An Algorithm for the Exact Treedepth Problem}
\author{James Trimble}{School of Computing Science, University of Glasgow \\ Glasgow, Scotland, UK }{j.trimble.1@research.gla.ac.uk}{https://orcid.org/0000-0001-7282-8745}{This work was supported by the Engineering and Physical Sciences Research Council (grant number EP/R513222/1).}
\authorrunning{J. Trimble} 
\keywords{Treedepth, Elimination Tree, Graph Algorithms} 
\begin{document}

\maketitle

\begin{abstract}
  We present a novel algorithm for the minimum-depth elimination tree problem, which is equivalent to the optimal treedepth decomposition problem.
  Our algorithm makes use of two cheaply-computed lower bound functions to prune the search tree, along with symmetry-breaking and domination rules. We present an empirical study showing that the algorithm outperforms the current state-of-the-art solver (which is based on a SAT encoding) by orders of magnitude on a range of graph classes.
\end{abstract}

\section{Introduction}

This paper presents a practical algorithm for finding an optimal treedepth decomposition of a graph.
A \emph{treedepth decomposition} of graph $G=(V,E)$ is a rooted forest $F$ with node set $V$, such that for
each edge $\{u,v\} \in E$, we have either that $u$ is an ancestor of $v$ or $v$ is an ancestor of $u$
in $F$.  The \emph{treedepth} of $G$ is the minimum depth of a treedepth decomposition of $G$, where
depth is defined as the maximum number of vertices along a path from the root of the tree to a leaf.

Treedepth is closely related to a number of other problems.  The treedepth of a
connected graph $G$ equals the minimum height of an elimination tree for $G$
(\cite{DBLP:books/daglib/0030491}, chapter 6), which equals
the graph's vertex ranking number \cite{DBLP:conf/stacs/DeogunKKM94}.
The treedepth of a graph $G$ is also equal to the minimum number of
colours in a centred colouring of $G$ \cite{DBLP:books/daglib/0030491}.

Finding an elimination tree of small height is applicable to the parallel Cholesky factorisation
of sparse matrices \cite{zmijewski1986parallel}.
Treedepth also has relevance to the design of fixed-parameter tractable (FPT) algorithms.  For example,
the Mixed Chinese Postman Problem is FPT when parameterised by treedepth, but W[1]-hard
when parameterised by treewidth or pathwidth \cite{DBLP:journals/siamdm/GutinJW16}.

The decision variant of the treedepth problem is NP-complete \cite{pothen1988complexity}.  However, it can
be solved in linear time if the input graph is a tree
\cite{DBLP:journals/ipl/Schaffer89}, and in polynomial time for interval graphs
\cite{Aspvall1994}, trapezoid graphs, permutation graphs and circular arc
graphs \cite{DBLP:journals/dam/DeogunKKM99}.  There is a polynomial-time approximation algorithm
for the problem that gives a result within $O(\log^2 n)$ of the optimal value.
The problem is fixed parameter
tractable with respect to both treewidth and treedepth \cite{DBLP:journals/siamdm/BodlaenderDJKKMT98,DBLP:conf/icalp/ReidlRVS14}.

Although numerous heuristics for finding good elimination trees have been designed
and implemented \cite{groer2012inddgo},
we are aware of only two existing implementations of \emph{exact} algorithms for minimum treedepth
decomposition; both of these are introduced in \cite{DBLP:conf/alenex/GanianLOS19,DBLP:journals/corr/abs-1911-12995}.  In
each case, the optimisation problem is solved as a sequence of decision problems,
with each decision problem encoded as an instance of the boolean satisfiability problem
and solved using a general-purpose SAT solver.

\subparagraph*{This paper's contribution.} This paper introduces a new algorithm for computing an optimal treedepth
decomposition.  The algorithm is self-contained and does not require an external solver,
although it can optionally use the graph-automorphism library Nauty to break symmetries.
The basic structure of the
algorithm is very simple.  To improve performance, three symmetry breaking and domination features
and two lower-bounding functions are added to the algorithm.  In a set of experiments, we show that our
algorithm is typically orders of magnitude faster than the current (SAT-based) state of the art.

\subparagraph*{Structure of the paper.} \Cref{sec:preliminaries} introduces concepts and notation.
\Cref{sec:algorithm} presents the core parts of our algorithm.  \Cref{sec:extrafeatures}
describes enhancements to the basic algorithm.  \Cref{sec:implementation} provides details of our implementation.
\Cref{sec:experiments} presents an experimental comparison with the existing SAT encoding for treedepth.
\Cref{sec:conclusion} concludes.

\section{Preliminaries}\label{sec:preliminaries}

Let $G=(V,E)$ be a graph, where we assume that the elements of $V$ are integers.
We write $V(G)$ and $E(G)$ to denote the vertex and edge sets of $G$.
The neighbourhood $N_G(v)$ of a vertex $v$ is the set of vertices that are adjacent to $v$.
For $S \subseteq V$, we denote by $G[S]$ the subgraph of $G$
induced by $S$; that is, $(S, \{\{u,v\} \in E \mid u,v \in S\})$.
We use the notation $G - v$ for the removal of one vertex and its incident edges;
that is, $G - v = G[V(G) \setminus \{v\}]$.

The concepts \emph{elimination forest} and \emph{elimination tree}
are defined recursively in terms of one another.  An elimination forest of graph $G$ is
a rooted forest with vertex set $V(G)$ composed of elimination trees of each of $G$'s connected components.
An elimination tree of a non-empty connected graph $C$ is a rooted tree $T$ with vertex set
$V(C)$.  If $C$ has only one vertex $v$, then $T$ is a tree containing only $v$.  Otherwise,
$T$ is formed by choosing a vertex $v \in V(C)$ as the root, finding an elimination forest
of $C - v$, and making the trees in that forest the child subtrees of $v$.

We will use the graph $G$ in Figure~\ref{fig:graph} to provide an example of an elimination
tree, and as our running example throughout the paper.
The second part of the figure shows an optimal elimination tree of $G$,
which has depth 4.  Observe that removing the root vertex of the tree ($5$) from $G$
splits the graph into two connected components, and each of these components corresponds to
one of the child subtrees of $5$ in the elimination tree.

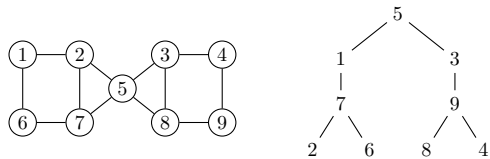
\begin{figure}[htb]
  \centering
\begin{tikzpicture}[scale=0.75, every node/.style={scale=0.75}]
    \node[shape=circle,draw=black,inner sep=2pt] (1) at (0,1.2) {1};
    \node[shape=circle,draw=black,inner sep=2pt] (2) at (1,1.2) {2};
    \node[shape=circle,draw=black,inner sep=2pt] (3) at (2.5,1.2) {3};
    \node[shape=circle,draw=black,inner sep=2pt] (4) at (3.5,1.2) {4};
    \node[shape=circle,draw=black,inner sep=2pt] (5) at (1.75,.6) {5};
    \node[shape=circle,draw=black,inner sep=2pt] (6) at (0,0) {6};
    \node[shape=circle,draw=black,inner sep=2pt] (7) at (1,0) {7};
    \node[shape=circle,draw=black,inner sep=2pt] (8) at (2.5,0) {8};
    \node[shape=circle,draw=black,inner sep=2pt] (9) at (3.5,0) {9};
    \node[] () at (0,-.6) {};   

    \draw (1) -- (2);
    \draw (3) -- (4);
    \draw (6) -- (7);
    \draw (8) -- (9);

    \draw (1) -- (6);
    \draw (2) -- (7);
    \draw (3) -- (8);
    \draw (4) -- (9);

    \draw (2) -- (5);
    \draw (3) -- (5);
    \draw (7) -- (5);
    \draw (8) -- (5);
\end{tikzpicture}
\qquad
\begin{tikzpicture}[scale=0.75, every node/.style={scale=0.75}]
    \node[shape=circle,inner sep=2pt] (2) at (0,0) {2};
    \node[shape=circle,inner sep=2pt] (6) at (1,0) {6};
    \node[shape=circle,inner sep=2pt] (8) at (2,0) {8};
    \node[shape=circle,inner sep=2pt] (4) at (3,0) {4};
    \node[shape=circle,inner sep=2pt] (7) at (.5,0.8) {7};
    \node[shape=circle,inner sep=2pt] (9) at (2.5,0.8) {9};
    \node[shape=circle,inner sep=2pt] (1) at (.5,1.6) {1};
    \node[shape=circle,inner sep=2pt] (3) at (2.5,1.6) {3};
    \node[shape=circle,inner sep=2pt] (5) at (1.5,2.4) {5};

    \draw (5) -- (1);
    \draw (5) -- (3);
    \draw (1) -- (7);
    \draw (3) -- (9);
    \draw (7) -- (2);
    \draw (7) -- (6);
    \draw (9) -- (8);
    \draw (9) -- (4);
\end{tikzpicture}

  \caption{An example graph $G$ (left) and an optimal elimination tree of $G$ (right)}
  \label{fig:graph}
\end{figure}

Every elimination forest is a treedepth decomposition, and for a given graph $G$ there is
at least one elimination forest whose depth equals the treedepth of $G$
\cite{DBLP:books/daglib/0030491,mujika2015about}.
Therefore, in order to find an optimal treedepth decomposition of $G$ it is sufficient to search
for a mimumum-depth elimination tree of $G$.  This is the approach taken in this paper.

\section{The Algorithm}\label{sec:algorithm}

Our algorithm is shown as pseudocode in \Cref{TheAlgorithm}.  The shaded parts, and the third
parameter of each of the first two functions, will be introduced in later sections and can be disregarded for now.

{
\begin{algorithm}[htb]
 \footnotesize
\DontPrintSemicolon
\newcommand\SimpleLowerBound{\FuncSty{simple\_lower\_bound}}
\newcommand\PathLowerBound{\FuncSty{can\_prune\_by\_path\_lower\_bound}}

    \begin{tikzpicture}[remember picture,overlay]
        \coordinate (bounds1c) at ($(pic cs:bounds1) + (0, 0.15)$);
        \coordinate (bounds2c) at ($(pic cs:bounds2) + (0, -.2em)$);
        \node [fill=codeshadingcolour, rounded corners=.5ex, fit=(bounds1c) (bounds2c)] { };

        \coordinate (parent1c) at ($(pic cs:parent1) + (0, 0.15)$);
        \coordinate (parent2c) at ($(pic cs:parent2) + (0, -1.1em)$);
        \node [fill=codeshadingcolour, rounded corners=.5ex, fit=(parent1c) (parent2c)] { };

        \coordinate (parent3c) at ($(pic cs:parent3) + (0, 0.13)$);
        \coordinate (parent4c) at ($(pic cs:parent4) + (0, 0)$);
        \node [fill=codeshadingcolour, rounded corners=.5ex, fit=(parent3c) (parent4c)] { };

        \coordinate (symdom1c) at ($(pic cs:symdom1) + (0, 0.15)$);
        \coordinate (symdom2c) at ($(pic cs:symdom2) + (0, 0.02)$);
        \node [fill=codeshadingcolour, rounded corners=.5ex, fit=(symdom1c) (symdom2c)] { };
    \end{tikzpicture}

\nl $\TDGeneral(G,k,w)$ \label{td_general_fun} \;
\nl \KwData{Graph $G$, maximum depth $k$, and parent vertex $w$}
\nl \KwResult{$\AlgVar{true}$ if and only if an elimination forest of $G$ with depth $\leq k$ exists}
\nl \Begin{
  \nl \lIf{$k=0$ and $|V(G)|>0$}{\KwSty{return} $\AlgVar{false}$ \label{TDGeneralK0}}
  \nl $\AlgVar{\mathcal{C}} \gets$ the connected components of $G$ \label{MakeConnectedComponents} \;
  \nl   \tikzmark{bounds1}\For{$C \in \mathcal{C}$ \label{StartLowerBounds}}{
    \nl \lIf{$\SimpleLowerBound(|V(C)|) > k$}{\KwSty{return} $\AlgVar{false}$}
  }
  \nl   \For{$C \in \mathcal{C}$}{
    \nl \lIf{$\PathLowerBound(C, k)$}{\KwSty{return} $\AlgVar{false}$\tikzmark{bounds2} \label{EndLowerBounds}}
  }
  \nl   \For{$C \in \mathcal{C}$}{
    \nl \lIf{\KwSty{not} $\TDConnected(C, k, w)$\label{LoopToCallTDConnected}}{\KwSty{return} $\AlgVar{false}$}
  }
  \nl \KwSty{return} $\AlgVar{true}$ \;
}

\vspace{.5em}

\nl $\TDConnected(G,k,w)$ \label{td_connected_fun} \;
\nl \KwData{Connected, nonempty graph $G$, maximum depth $k\geq 1$, and parent vertex $w$}
\nl \KwResult{$\AlgVar{true}$ if and only if an elimination tree of $G$ with depth $\leq k$ exists}
\nl \Begin{
  \nl \If{$|V(G)|=1$ \label{SingleVertexGraph}}{
    \nl \tikzmark{parent1}$v \gets$ the unique element of $V(G)$\tikzmark{parent2} \;
    \nl $\AlgVar{parent}[v] \gets w$ \label{SetParentBaseCase} \;
    \nl \KwSty{return} $\AlgVar{true}$ \label{EndSingleVertexGraph} \;
  }
  \nl   \For{$v \in V(G)$ \label{VLoop}}{
    \nl \tikzmark{symdom1}\lIf{$v$ is ruled out by a symmetry or domination rule}{\Continue \tikzmark{symdom2}}
    \nl \tikzmark{parent3}$\AlgVar{parent}[v] \gets w$\tikzmark{parent4} \label{SetParent} \;
    \nl \lIf{$\TDGeneral(G - v, k - 1, v)$\label{CallTDGeneral}}{\KwSty{return} $\AlgVar{true}$}
  }
  \nl \KwSty{return} $\AlgVar{false}$ \;
}

\vspace{.5em}

\nl $\TDOptimise(G)$ \label{td_optimise_fun} \;
\nl \KwData{A graph $G$}
\nl \KwResult{The treedepth of $G$}
\nl \Begin{
  \nl $k \gets 0$ \;
  \nl \lWhile{$\TDGeneral(G,k,0) = \AlgVar{false}$}{$k \gets k + 1$}
  \nl \KwSty{return} $k$ \;
}
\caption{An algorithm to find an elimination forest of minimum depth.  To read the basic algorithm
(without optimisations), disregard the shaded sections and the third parameter
of each of the first two functions.}
\label{TheAlgorithm}
\end{algorithm}
}

\subparagraph*{Inputs and outputs.}
The algorithm's first function, $\TDGeneral()$, takes a graph $G$ and an integer $k \geq 0$, and returns
\emph{true} if and only if there exists an elimination forest of $G$ of depth $k$ or less.  The function
$\TDConnected()$ takes a connected, non-empty graph $G$ and an integer $k > 0$ and returns \emph{true}
if and only if there exists an elimination tree of depth $k$ or less.
The algorithm is run by calling $\TDOptimise()$, which takes a graph $G$ and returns the treedepth of $G$.

\subparagraph*{Details of the functions.}
The first two functions are mutually recursive, and closely follow
the definitions of elimination tree and forest.
The function $\TDGeneral()$ begins by returning \emph{false}---indicating
infeasibility---if a elimination tree of depth zero is sought for a non-empty graph.
The function then returns \emph{true} if and only if an elimination tree of depth no greater
than $k$ exists for each connected component of the graph.

The function $\TDConnected(G, k)$ returns \emph{true} if $G$ has a single vertex
(\linerangeref{SingleVertexGraph}{EndSingleVertexGraph}).  Otherwise,
it tries each vertex $v \in V(G)$ in turn (\codelineref{VLoop}), and returns \emph{true} if
and only if one of these $v$ values can be the root of an elimination tree of depth
$k$---which is the case if and only if 
an elimination forest of depth no greater than $k-1$ exists for $G-v$.

The main function, $\TDOptimise()$, carries  
out repeated calls to $\TDGeneral()$ with ascending values of $k$
until a feasible depth is reached.  It would be possible to implement a
branch-and-bound variant of the algorithm with a little additional effort, and it is likely that
this would be somewhat faster than the approach we have taken.  We decided not to do so for two
reasons.  First, having a sequence of decision problems simplifies the exposition of the algorithm.
Second, we have observed in practice that two of the decision problems---the final unsatisfiable problem
and the satisfiable problem after which the algorithm terminates---take up most of the run time.
This suggests that a branch-and-bound approach would be of limited benefit.

An inductive proof of the algorithm's correctness appears in \Cref{appendix:proof}.

\subparagraph*{Example.}
Returning to our example graph $G$ from Figure~\ref{fig:graph}, suppose $G$ is 
passed to $\TDConnected()$.  Figure~\ref{fig:twosubproblems} illustrates two of the nine subproblems explored at
\codelineref{VLoop}.  In the left part of the figure, $v=1$.  Choosing this vertex
as the root of the elimination tree leaves a connected graph, which is
passed to $\TDGeneral()$ at \codelineref{CallTDGeneral}.  The right part of the figure
corresponds to the decision $v=5$.  Choosing
this vertex as the root leaves a disconnected graph, with two four-vertex components.  This disconnected
graph is passed to $\TDGeneral()$ at \codelineref{CallTDGeneral}, and subsequently $\TDConnected()$ is
called for each of the two components.

\begin{figure}[htb]
  \centering
\begin{tikzpicture}[scale=0.9, every node/.style={scale=0.9}]
    \node[] (graph1) at (.5,-.3) {
      \begin{tikzpicture}[scale=0.9, every node/.style={scale=0.9}]
          \node[shape=circle,draw=black,inner sep=1pt] (2) at (.8,.8) {2};
          \node[shape=circle,draw=black,inner sep=1pt] (3) at (2,.8) {3};
          \node[shape=circle,draw=black,inner sep=1pt] (4) at (2.8,.8) {4};
          \node[shape=circle,draw=black,inner sep=1pt] (5) at (1.4,.4) {5};
          \node[shape=circle,draw=black,inner sep=1pt] (6) at (0,0) {6};
          \node[shape=circle,draw=black,inner sep=1pt] (7) at (.8,0) {7};
          \node[shape=circle,draw=black,inner sep=1pt] (8) at (2,0) {8};
          \node[shape=circle,draw=black,inner sep=1pt] (9) at (2.8,0) {9};

          \draw (3) -- (4);
          \draw (6) -- (7);
          \draw (8) -- (9);

          \draw (2) -- (7);
          \draw (3) -- (8);
          \draw (4) -- (9);

          \draw (2) -- (5);
          \draw (3) -- (5);
          \draw (7) -- (5);
          \draw (8) -- (5);
      \end{tikzpicture}
    };
    \node[shape=circle,inner sep=2pt] (1) at (.5,0.8) {1};
    \draw (1) -- (graph1);
\end{tikzpicture}
\qquad
\begin{tikzpicture}[scale=0.9, every node/.style={scale=0.9}]
    \node[] (graph1) at (-.2,-.3) {
      \begin{tikzpicture}[scale=0.9, every node/.style={scale=0.9}]
          \node[shape=circle,draw=black,inner sep=1pt] (1) at (0,.8) {1};
          \node[shape=circle,draw=black,inner sep=1pt] (2) at (.8,.8) {2};
          \node[shape=circle,draw=black,inner sep=1pt] (6) at (0,0) {6};
          \node[shape=circle,draw=black,inner sep=1pt] (7) at (.8,0) {7};
          \draw (1) -- (2);
          \draw (1) -- (6);
          \draw (2) -- (7);
          \draw (6) -- (7);
      \end{tikzpicture}
    };
    \node[] (graph2) at (1.2,-.3) {
      \begin{tikzpicture}[scale=0.9, every node/.style={scale=0.9}]
          \node[shape=circle,draw=black,inner sep=1pt] (3) at (0,.8) {3};
          \node[shape=circle,draw=black,inner sep=1pt] (4) at (.8,.8) {4};
          \node[shape=circle,draw=black,inner sep=1pt] (8) at (0,0) {8};
          \node[shape=circle,draw=black,inner sep=1pt] (9) at (.8,0) {9};
          \draw (3) -- (4);
          \draw (8) -- (9);
          \draw (3) -- (8);
          \draw (4) -- (9);
      \end{tikzpicture}
    };
    \node[shape=circle,inner sep=2pt] (5) at (.5,0.8) {5};
    \draw (5) -- (graph2);
    \draw (5) -- (graph1);
\end{tikzpicture}
  \caption{Two of the nine subproblems visited by the first call to $\TDConnected$
    on our example graph}
  \label{fig:twosubproblems}
\end{figure}


\subsection{Generating an Elimination Forest}\label{sec:generating}

The algorithm described so far returns only a single integer: the treedepth of the input graph.
We can easily modify the algorithm to also produce an elimination forest of that depth.

We assume that vertices of the input graph $G$ are numbered from $1$ to $|V(G)|$. A global array
with $|V(G)|$ elements, $\AlgVar{parent}$, is used to record the parent of each vertex
in the elimination forest.  A value $\AlgVar{parent}[v] = 0$ indicates that vertex $v$ is a root.
Lines \ref{SetParentBaseCase} and \ref{SetParent} of \cref{TheAlgorithm} record the parent of $v$.

The functions $\TDGeneral()$ and $\TDConnected()$ both have $w$ as an extra parameter.
Vertex $w$ will be parent to the first vertex chosen from $G$.  At the first level
of recursion for either of these functions, $w=0$, indicating that the next vertex to be selected will be
a root.

When either $\TDGeneral()$ and $\TDConnected()$ returns $\AlgVar{true}$, this indicates
not only that an elimination tree of depth $k$ of the subgraph $G$ exists, but also that such
a decomposition has been recorded in the $\AlgVar{parent}$ array (with any $\AlgVar{parent}$ value
not in $V(G)$ indicating a root of the subproblem's decomposition).

\section{Enhancements to the Algorithm: Symmetry Breaking and Domination Rules, Pruning, and Sorting}\label{sec:extrafeatures}

In this section we describe five improvements that can be made to the basic algorithm.
The first three of these are symmetry breaking and domination rules that
allow us to avoid choosing some values of $v$ at \codelineref{VLoop} of
$\TDConnected()$.
The fourth technique allows us to prune subproblems by quickly computing a lower bound
on treedepth; we introduce two such bounds.  The final technique is a re-ordering of vertices before solving.

\subsection{Symmetry Breaking and Domination Rules}

Recall that the loop at \codelineref{VLoop} of $\TDConnected()$ tries each vertex $v$ as a
potential root of the elimination tree, attempting to find an elimination tree of depth $k$ or less.
For some graphs, there are several values of $v$ that may be chosen as the root of such a tree;
let $S$ be the set of such vertices.  We can omit some choices of
$v$ at line \codelineref{VLoop} without affecting the algorithm's correctness,
provided at least one member of $S$ is chosen.  The three
symmetry-breaking and domination rules that follow make use of this fact to avoid visiting
some vertices. They ensure that the least-numbered vertex in $S$ is visited, if $S$ is non-empty.

\subparagraph*{Symmetry breaking using vertex orbits}

Our first symmetry breaking technique is applied only to connected graphs.
Before beginning the algorithm, we use Nauty \cite{McKay201494} to compute the orbits
of the vertices.  (Recall that vertices $v$ and $v'$ are in the same orbit if and
only if there is an automorphism that maps $v$ to $v'$.)  In the first call to
$\TDConnected()$---that is, the call where $G$ is the full input graph---we
can avoid choosing any value of $v$ in the loop if there exists a vertex $v' < v$ that is in the
same orbit as $v$.  This symmetry-breaking technique is valid
because the subgraph created by the removal of $v$ is isomorphic
to the subgraph created by the removal of $v'$, and thus has the same treedepth.

As an example, consider the graph in Figure~\ref{fig:graph}.  Since vertices $1$, $4$, $6$, and $9$
are in the same orbit, we can avoid choosing vertices $4$, $6$, and $9$ in the
first call to $\TDConnected()$.

This technique is useful on highly symmetrical graphs, but of course cannot
be expected to achieve a speedup of more than $|V(G)|$.
A number of avenues for improved symmetry breaking could be explored in future work.
It would be straightforward
to extend the symmetry breaking to disconnected graphs by computing the orbits of
vertices in each connected component separately.  Our technique for symmetry breaking
could also be used for subproblems; this may be useful
for very symmetrical graphs, but we suspect that the cost of additional calls to Nauty would outweigh
the benefit in many cases.   We could move beyond finding the orbits of
single vertices; it is possible to find the orbits of pairs (or larger sets)
of vertices with a single call to Nauty and a small amount of extra work.
Lastly, we could use symmetry-breaking techniques from constraint programming
such as GE-trees \cite{DBLP:conf/ecai/Roney-DougalGKL04}.

\subparagraph*{Vertex domination}

Suppose we have $v, v' \in V(G)$ (not necessarily adjacent) such that $v' < v$ and
$N_G(v') \setminus \{v\} \supseteq N_G(v) \setminus \{v'\}$.  We say that
$v'$ \emph{dominates} $v$.  Clearly,
$G - v'$ is isomorphic to a subgraph of $G - v$.  Thus, the minimum depth of a treedepth
decomposition of $G$ rooted at $v$ is no smaller that the minimum depth of a
decomposition rooted at $v'$.  We can use this fact in $\TDConnected()$
(at any depth of recursion) by rejecting at \codelineref{VLoop} any value of $v$ such that there exists
a lower-numbered $v'$ such that $N_G(v') \setminus \{v\} \supseteq N_G(v) \setminus \{v'\}$.

As an example, suppose the input graph is a clique $K_n$, with the vertices numbered
$\{1, \dots, n\}$.  At each call to $\TDConnected()$, the lowest-numbered
vertex in $G$ dominates the other vertices; thus only one vertex needs to be explored
in the loop at \codelineref{VLoop}.

Our use of vertex domination is based on \cite{DBLP:conf/alenex/GanianLOS19,DBLP:journals/corr/abs-1911-12995}, where
the technique is used in a preprocessing step to generate constraints for the input graph,
rather than applied to each subproblem.

\subparagraph*{Only-child vertices}

Consider again our example graph in Figure~\ref{fig:graph}.  Suppose that the
symmetry-breaking and domination rules described so far in this section are disabled, and that
a decomposition of depth 3 is being sought.  
Figure~\ref{fig:onlychildexample}(a) shows the program state after selecting vertex
$6$ as the root vertex of the tree and vertex $5$ as its child.  There are two
subproblems: the subgraphs induced by $\{1,2,7\}$ and $\{3,4,8,9\}$.  We call $5$
an \emph{only child} in the tree because it has a parent (vertex $6$) but has no siblings.

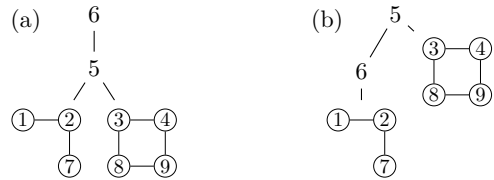
\begin{figure}[htb]
  \centering
\begin{tikzpicture}[scale=0.9, every node/.style={scale=0.9}]
    \node[] (graph1) at (-.2,-.3) {
      \begin{tikzpicture}[scale=0.85, every node/.style={scale=0.85}]
          \node[shape=circle,draw=black,inner sep=1pt] (1) at (0,.8) {1};
          \node[shape=circle,draw=black,inner sep=1pt] (2) at (.8,.8) {2};
          \node[shape=circle,draw=black,inner sep=1pt] (7) at (.8,0) {7};
          \draw (1) -- (2);
          \draw (2) -- (7);
      \end{tikzpicture}
    };
    \node[] (graph2) at (1.2,-.3) {
      \begin{tikzpicture}[scale=0.85, every node/.style={scale=0.85}]
          \node[shape=circle,draw=black,inner sep=1pt] (3) at (0,.8) {3};
          \node[shape=circle,draw=black,inner sep=1pt] (4) at (.8,.8) {4};
          \node[shape=circle,draw=black,inner sep=1pt] (8) at (0,0) {8};
          \node[shape=circle,draw=black,inner sep=1pt] (9) at (.8,0) {9};
          \draw (3) -- (4);
          \draw (8) -- (9);
          \draw (3) -- (8);
          \draw (4) -- (9);
      \end{tikzpicture}
    };
    \node[shape=circle,inner sep=2pt] (5) at (.5,0.8) {5};
    \node[shape=circle,inner sep=2pt] (6) at (.5,1.6) {6};
    \draw (6) -- (5);
    \draw (5) -- (graph2);
    \draw (5) -- (graph1);
    \node[shape=circle,inner sep=2pt] (a) at (-.5,1.5) {(a)};
\end{tikzpicture}
\qquad
\qquad
\begin{tikzpicture}[scale=0.9, every node/.style={scale=0.9}]
    \node[] (graph1) at (0,-.3) {
      \begin{tikzpicture}[scale=0.85, every node/.style={scale=0.85}]
          \node[shape=circle,draw=black,inner sep=1pt] (1) at (0,.8) {1};
          \node[shape=circle,draw=black,inner sep=1pt] (2) at (.8,.8) {2};
          \node[shape=circle,draw=black,inner sep=1pt] (7) at (.8,0) {7};
          \draw (1) -- (2);
          \draw (2) -- (7);
      \end{tikzpicture}
    };
    \node[] (graph2) at (1.4,.75) {
      \begin{tikzpicture}[scale=0.85, every node/.style={scale=0.85}]
          \node[shape=circle,draw=black,inner sep=1pt] (3) at (0,.8) {3};
          \node[shape=circle,draw=black,inner sep=1pt] (4) at (.8,.8) {4};
          \node[shape=circle,draw=black,inner sep=1pt] (8) at (0,0) {8};
          \node[shape=circle,draw=black,inner sep=1pt] (9) at (.8,0) {9};
          \draw (3) -- (4);
          \draw (8) -- (9);
          \draw (3) -- (8);
          \draw (4) -- (9);
      \end{tikzpicture}
    };
    \node[shape=circle,inner sep=2pt] (5) at (.5,1.6) {5};
    \node[shape=circle,inner sep=2pt] (6) at (0,0.75) {6};
    \draw (5) -- (6);
    \draw (5) -- (graph2);
    \draw (6) -- (graph1);
    \node[shape=circle,inner sep=2pt] (b) at (-.5,1.5) {(b)};
\end{tikzpicture}
  \caption{The only child rule allows us to avoid the effort of exploring the subproblem
  in the left part of the figure, since at least as good a decomposition can be achieved
  by choosing $5$ as the root vertex.}
  \label{fig:onlychildexample}
\end{figure}

Figure~\ref{fig:onlychildexample}(b) shows the tree if vertex $5$
is chosen before, rather than after, vertex $6$.  Observe that the same two subproblems
appear, but one of them has been lifted to a higher level in the tree.  It is clear
that the optimal elimination tree based on Figure~\ref{fig:onlychildexample}(a)
will have depth no less
than that of the optimal elimination tree based on Figure~\ref{fig:onlychildexample}(b).

This is an example of a rule that holds in general.

\begin{proposition}\label{lemma1}
  (Only-child rule) Let $T$ be a depth-$k$ treedepth decomposition of a connected graph $G$.
  If the root vertex of $T$ has an only child $v'$, then there exists a treedepth decomposition
  of $G$ of depth no greater than $k$ with $v'$ as its root.
\end{proposition}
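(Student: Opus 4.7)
My plan is to prove the proposition by explicit construction. Starting from the depth-$k$ decomposition $T$ with root $r$ and only child $v'$, I will build a new decomposition $T'$ rooted at $v'$ by placing $r$ as a child of $v'$ and then re-attaching every subtree of $T$ below $v'$ that contains a neighbour of $r$ so that it sits below $r$ instead.

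Concretely, let $c_1,\dots,c_m$ be the children of $v'$ in $T$ and let $T_i$ be the subtree of $T$ rooted at $c_i$. Partition the indices into $I=\{i:V(T_i)\cap N_G(r)\neq\emptyset\}$ and $J=\{1,\dots,m\}\setminus I$. In $T'$, the root is $v'$; its children are $r$ together with $\{c_j:j\in J\}$; the children of $r$ are $\{c_i:i\in I\}$; and each subtree $T_\ell$ is kept unchanged below $c_\ell$. For the depth bound, since $c_\ell$ sits at depth $3$ in $T$, each $T_\ell$ has depth at most $k-2$; in $T'$, subtrees indexed by $I$ are rooted at depth $3$ and so extend to depth at most $k$, those indexed by $J$ are rooted at depth $2$ and extend to depth at most $k-1$, and $r$ itself sits at depth $2$, so the overall depth of $T'$ is at most $k$.

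The delicate part is validity, which I will split into two cases. For edges of $G$ not incident to $r$, I observe that the ancestor relation among vertices of $V(G)\setminus\{r\}$ is identical in $T$ and in $T'$: each $T_\ell$ is preserved, $v'$ remains an ancestor of everything below it, and vertices lying in different $T_\ell$'s remain incomparable in both trees. Because $T$ is valid, $G$ can contain no edge between distinct $T_\ell$'s (such an edge would already be uncovered in $T$), so every non-$r$ edge of $G$ remains covered in $T'$. For an edge $\{r,u\}$, validity of $T$ combined with the only-child hypothesis forces $u=v'$ or $u\in V(T_\ell)$ for some $\ell$; in the former case $v'$ is an ancestor of $r$ in $T'$, and in the latter case $\ell\in I$ by definition of $I$, so $r$ is an ancestor of $u$ in $T'$ by construction. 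The main obstacle I foresee is being careful with this case analysis and with degenerate situations such as $I=\emptyset$ or some $T_\ell$ being a single vertex, but no further ideas beyond the construction are required.
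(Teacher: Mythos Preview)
Your proposal is correct and follows essentially the same construction as the paper: partition the subtrees below $v'$ according to whether they contain a neighbour of the root, swap the root and $v'$, and re-attach the neighbour-containing subtrees below the old root while leaving the others as children of $v'$. Your write-up is in fact more careful than the paper's, which relies on a figure and leaves the verification that the new tree is a valid treedepth decomposition implicit; your explicit check of the ancestor condition for edges incident and not incident to $r$ fills that gap.
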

\begin{proof}
Let $v$ be the root of $T$, and $v'$ its only child.
Let $\{G_1, \dots, G_a\}$ be the child subproblems that contain a vertex adjacent to $v$
in $G$, and let $\{H_1, \dots, H_b\}$ be the child subproblems that do not,
as illustrated in Figure~\ref{fig:onlychildgeneral} (either of these sets of subproblems
may be empty).
For each graph in $\{G_1, \dots, G_a\} \cup \{H_1, \dots, H_b\}$, there must exist
a decomposition of depth at most $k-2$.

If we reverse the order of $v$ and $v'$,
then the tree and its subproblems will be as shown in the second part of Figure~\ref{fig:onlychildgeneral}
(where $\{H_1, \dots, H_b\}$ are moved up a level).  Clearly, it is possible to construct a decomposition
with depth no greater than $k$ by using the same decompositions of the
subproblems $\{G_1, \dots, G_a\} \cup \{H_1, \dots, H_b\}$ as in $T$.
\end{proof}

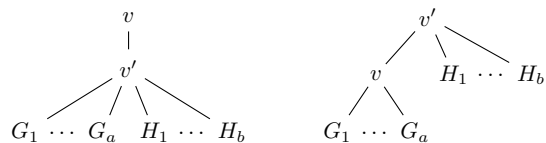
\begin{figure}[htb]
  \centering
\begin{tikzpicture}[scale=0.85, every node/.style={scale=0.85}]
    \node[] (G1) at (-1.6,-.2) {$G_1$};
    \node[] (Gdots) at (-1,-.2) {$\dots$};
    \node[] (Ga) at (-.4,-.2) {$G_a$};
    \node[] (H1) at (.4,-.2) {$H_1$};
    \node[] (Hdots) at (1,-.2) {$\dots$};
    \node[] (Hb) at (1.6,-.2) {$H_b$};
    \node[shape=circle,inner sep=2pt] (vprime) at (0,0.8) {$v'$};
    \node[shape=circle,inner sep=2pt] (v) at (0,1.6) {$v$};
    \draw (v) -- (vprime);
    \draw (vprime) -- (G1);
    \draw (vprime) -- (Ga);
    \draw (vprime) -- (H1);
    \draw (vprime) -- (Hb);
\end{tikzpicture}
\qquad
\begin{tikzpicture}[scale=0.85, every node/.style={scale=0.85}]
    \node[] (G1) at (-1.4,-.2) {$G_1$};
    \node[] (Gdots) at (-.8,-.2) {$\dots$};
    \node[] (Ga) at (-.2,-.2) {$G_a$};
    \node[] (H1) at (.4,.7) {$H_1$};
    \node[] (Hdots) at (1,.7) {$\dots$};
    \node[] (Hb) at (1.6,.7) {$H_b$};
    \node[shape=circle,inner sep=2pt] (v) at (-.8,.7) {$v$};
    \node[shape=circle,inner sep=2pt] (vprime) at (0,1.6) {$v'$};
    \draw (v) -- (vprime);
    \draw (v) -- (G1);
    \draw (v) -- (Ga);
    \draw (vprime) -- (H1);
    \draw (vprime) -- (Hb);
\end{tikzpicture}
  \caption{The general case of the only child rule.  If a vertex
  $v$ has an only child $v'$ in the elimination tree, then an elimination
  tree of the no greater depth can be found by reversing the positions of $v$
  and $v'$.}
  \label{fig:onlychildgeneral}
\end{figure}

This can be used for a simple domination breaking rule.  If the removal of a vertex $v$ chosen at
\codelineref{VLoop} of \cref{TheAlgorithm} leaves a non-empty, connected graph, then the child
vertex of $v$ in the treedepth decomposition, $v'$, must be an only child.  The only-child rule
allows us to omit any vertex $v'$ that has a lower number than $v$.

\subsection{Computing Lower Bounds}

At \linerangeref{StartLowerBounds}{EndLowerBounds} of \cref{TheAlgorithm},
lower bounds are computed with the goal of quickly proving that one of the
subproblems is infeasible.  For each connected component $C$, two functions are
called to find lower bounds on the treedepth of $C$.  If either of these bounds
is greater than $k$, the current subproblem is unsatisfiable and
$\AlgVar{false}$ is returned.  The first lower-bounding function uses a simple and very fast
algorithm to compute a bound based on the number of vertices in the subproblem
and an upper bound on the maximum degree.  The second function greedily
constructs a path in the graph, and uses as a lower bound a well-known formula for the treedepth of a
path graph.

\subparagraph*{Simple lower bound}

Let $b > 0$ be an upper bound on the maximum degree of a graph $G$.
If $G$ has no vertices, then its treedepth is zero.  Otherwise, if we remove a single vertex and
its incident edges from $G$, it is clear that the resulting graph can
have at most $b$ connected components and at least one of these components must have
$\lceil (|V(G)| - 1) / b \rceil$ or more vertices.
\Cref{SimpleLowerBoundAlgorithm} is a recursive algorithm that makes use of this
fact to give a lower bound on the treedepth of a graph.

{
\begin{algorithm}[htb]
 \footnotesize
\DontPrintSemicolon
\newcommand\SimpleLowerBound{\FuncSty{simple\_lower\_bound}}
\nl $\SimpleLowerBound(n)$ \label{simple_lower_bound_fun} \;
\nl \Begin{
  \nl \lIf{$n = 0$}{\KwSty{return} $0$}
  \nl \KwSty{return} $1 + \SimpleLowerBound(\lceil (n - 1) / b \rceil )$ \;
}
\caption{The simple lower bound function}
\label{SimpleLowerBoundAlgorithm}
\end{algorithm}
}

In our implementation, $b$ is a global variable equal to the maximum degree of the
input graph.  If $b = 0$, we do not use this lower bounding technique.

This bounding algorithm runs in time $O(\log n)$, where $n$ is the argument passed
to the function.
As an optimisation, our implementation pre-computes $\FuncSty{simple\_lower\_bound}(n)$
for $n \in \{0, \dots, |V(G)|\}$, and saves these values in an array before calling
$\FuncSty{td\_optimise}()$, thus allowing the bounding function to run in constant time.
However, we use a bitset popcount (number of set bits) operation to calculate the value of $n$, and therefore
the overall time complexity of calculating this bound is $O(n)$, where $n$ is the size
of the input graph.

Our example graph in Figure~\ref{fig:graph} has $9$ vertices and maximum degree $4$.
The bounding function therefore gives a lower bound of 3 on the graph's treedepth.

\subparagraph*{Path lower bound}

A graph containing a path of $k$ vertices has treedepth at least
$\lceil \log_2(k+1)\rceil$ \cite{DBLP:books/daglib/0030491}.  We can thus cheaply
compute a lower bound on the treedepth of a graph $G$ by greedily
finding a path in $G$, as shown in \Cref{PathLowerBoundAlgorithm}.
We choose the lowest-numbered vertex $v$ in $G$ as our starting point, and
attempt to grow the path from $v$ in two directions (\codelineref{TwoDirectionsOfPathSearch}).
In each of these two growing phases, the algorithm extends the path
by one vertex at a time until the most-recently-visited vertex has no
neighbours that are not on the path (\linerangeref{StartOfPathLoop}{EndOfPathLoop}).
If $\lceil \log_2(k+1)\rceil$, where $k$ is the length of the constructed path,
exceeds the target treedepth, the algorithm returns $\AlgVar{true}$.

{
\begin{algorithm}[htb]
 \footnotesize
\DontPrintSemicolon
\newcommand\PathLowerBound{\FuncSty{can\_prune\_by\_path\_lower\_bound}}
\SetKwFor{RepTimes}{repeat}{times}{end}
\nl $\PathLowerBound(G, \AlgVar{targetDepth})$ \label{path_lower_bound_fun} \;
\nl \Begin{
  \nl $v \gets \min(V(G))$ \;
  \nl $P \gets \{v\}$ \Comment{$P$ is the set of vertices on the path} \;
  \nl \RepTimes{2\label{TwoDirectionsOfPathSearch}} {
    \nl $u \gets v$ \;
    \nl \While {$u$ has a neighbour that is not contained in $P$ \label{StartOfPathLoop}}{
      \nl $u \gets$ the least such neighbour \;
      \nl $P \gets P \cup \{u\}$ \label{EndOfPathLoop} \;
    }
  }
  \nl \KwSty{return} $\lceil \log_2(|P| + 1) \rceil > \AlgVar{targetDepth}$ \label{PathPrune} \;
}
\caption{The path lower bound function}
\label{PathLowerBoundAlgorithm}
\end{algorithm}
}

As an example, consider again the graph in Figure~\ref{fig:graph}.  The
algorithm $\FuncSty{path\_lower\_bound}()$ begins with $v=1$, then greedily
extends the path with vertices $2, 5, 3, 4, 9,$ and $8$.  As it is not
possible to extend the path further from vertex $8$, the algorithm returns
to vertex $1$ and prepends vertices $6$ and $7$ to the path.  The path found
is thus $7\Edge 6\Edge 1\Edge 2\Edge 5\Edge 3\Edge 4\Edge 9\Edge 8$, which contains
all nine of the graph's vertices and gives a lower bound of $\lceil \log_2(9+1)\rceil = 4$.
(In this case, this equals the treedepth of the graph, so our optimisation algorithm is
able to determine that the graph has treedepth greater than 3 without any recursive calls
on subgraphs).

Since we use bitset operations to iterate over the neighbours of $u$ at \codelineref{StartOfPathLoop}
of \cref{PathLowerBoundAlgorithm}, the algorithm runs in $O(|V(G)|^2)$ time.

Our implementation has an additional small improvement to this lower bounding function which,
for simplicity, is not shown in \cref{PathLowerBoundAlgorithm}.  If the found path
has three or more vertices, our implementation looks for a pair of vertices $(v,w)$ that
appear in the path, such that $v$ and $w$ are neighbours in $G$ but do not appear
side-by-side in the path.
The section of the path from $v$ to $w$ thus forms a cycle, and it is possible to use the
bound $1 + \lceil \log_2(k)\rceil$, where $k$ is the number of vertices in the cycle \cite{DBLP:books/daglib/0030491}.
The algorithm continues to visit all such pairs of vertices, stopping early if the
calculated bound is greater than $\AlgVar{targetDepth}$.  This cycle-finding extension
of the bounding algorithm runs in $O(|V(G)|^2)$ time, and thus does not increase the algorithm's
time complexity.

\subparagraph*{Comparison of the two bounds}

Neither the simple lower bound nor the path lower bound dominates the other.  We have already
seen that for our example graph, the path lower bound is greater than the simple lower bound.
\Cref{fig:boundsgraph} is a graph for which the reverse is true.  The graph has order 7 and
maximum degree 3; therefore the simple lower bound is 3.  The path lower-bounding function
finds the path $1\Edge 5\Edge 2$ which has length 3, giving a bound of 2.

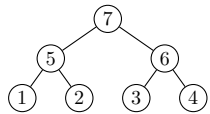
\begin{figure}[htb]
  \centering
\begin{tikzpicture}[scale=0.75, every node/.style={scale=0.75}]
    \node[shape=circle,draw=black,inner sep=2pt] (1) at (0,0) {1};
    \node[shape=circle,draw=black,inner sep=2pt] (2) at (1,0) {2};
    \node[shape=circle,draw=black,inner sep=2pt] (3) at (2,0) {3};
    \node[shape=circle,draw=black,inner sep=2pt] (4) at (3,0) {4};
    \node[shape=circle,draw=black,inner sep=2pt] (5) at (.5,0.7) {5};
    \node[shape=circle,draw=black,inner sep=2pt] (6) at (2.5,0.7) {6};
    \node[shape=circle,draw=black,inner sep=2pt] (7) at (1.5,1.4) {7};

    \draw (1) -- (5);
    \draw (2) -- (5);
    \draw (3) -- (6);
    \draw (4) -- (6);
    \draw (5) -- (7);
    \draw (6) -- (7);
\end{tikzpicture}

  \caption{A graph for which the simple lower bound is greater than the path lower bound}
  \label{fig:boundsgraph}
\end{figure}

\subsection{Initial Vertex Ordering by Degree}

Before running our algorithm, the vertices of the input graph are reordered by
non-increasing degree.  We have observed that this leads to speed-ups on graph
classes including binary trees and random graphs.  We give two speculative reasons for
this.  First, it seems likely that for satisfiable values of the treedepth
parameter $k$, we are most likely to find an elimination forest of depth $k$ quickly
by choosing high-degree vertices first, as these are most likely to split the remainder
of the graph into small components.  Second, the path-finding algorithm in our path lower
bound function chooses low-numbered vertices first, and by preferring high-degree vertices
it is less likely to run into ``dead ends''.

\section{Bitset Implementation}\label{sec:implementation}

We use bitsets to represent sets, including rows of adjacency matrices.  Induced subgraphs
are not stored explicitly in memory; our program simply passes a pointer to the full graph
along with a pointer to the set of vertices that induce the subgraph.

The space complexity of our algorithm compares favourably to that
of the partitioning-based SAT encoding, which uses $O(n^3)k$ clauses, where $n=|V(G)|$.

\begin{proposition}\label{spacetheorem}
Using a bitset implementation, the algorithm requires $O(n^2)$ space.
\end{proposition}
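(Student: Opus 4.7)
The plan is to account for all the memory the algorithm uses and show each component fits within the $O(n^2)$ budget. I will split the argument into a static part (data that lives for the entire run) and a dynamic part (data allocated per recursive frame), then bound the recursion depth to control the dynamic part.

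For the static data, the input graph is stored once as a bitset adjacency matrix, which takes $n$ rows of $n$ bits each, giving $O(n^2)$ space. The global \AlgVar{parent} array holds one integer per vertex, contributing $O(n)$ space, as does the precomputed lookup table for \FuncSty{simple\_lower\_bound}. The orbit information produced by Nauty is likewise a vector of $n$ orbit labels, so $O(n)$.

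For the dynamic data, the key observation is that subgraphs are never copied: a call to $\TDGeneral$ or $\TDConnected$ receives a pointer to the fixed adjacency matrix plus a bitset of size $n$ describing which vertices induce the current subproblem. Each recursive frame also needs a handful of auxiliary bitsets of size $n$ to enumerate connected components (\codelineref{MakeConnectedComponents}), to drive the path-finding procedure of \Cref{PathLowerBoundAlgorithm}, and to hold the loop variable representing the chosen vertex $v$. Each of these is $O(n)$ bits, so a single frame uses $O(n)$ space.

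It remains to bound the recursion depth. Every call to $\TDConnected(G,k,w)$ either returns immediately at the base case or recurses via $\TDGeneral(G-v,k-1,v)$, and every call to $\TDGeneral$ invokes $\TDConnected$ on connected components of its argument, each of which has at most $|V(G)|$ vertices. Along any root-to-leaf path in the call tree the vertex count is non-increasing and drops by at least one between consecutive $\TDConnected$ calls, so the depth of the recursion stack is $O(n)$. Multiplying the per-frame cost by the stack depth yields $O(n)\cdot O(n)=O(n^2)$ for the dynamic part. Adding the static $O(n^2)$ contribution completes the bound.

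The step most likely to require care is the per-frame accounting: one must be explicit that the adjacency matrix is shared across frames rather than duplicated, and that the auxiliary workspaces (for components, paths, and cycle detection) are all representable by a constant number of $n$-bit bitsets so that no frame blows up to $\omega(n)$ space.
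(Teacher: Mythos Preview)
Your overall structure is sound, but there is a gap in the per-frame accounting. You assert that each recursive frame uses only ``a handful of auxiliary bitsets of size $n$'' to enumerate connected components at \codelineref{MakeConnectedComponents}, and hence $O(n)$ space per frame. This is not justified: a single call to $\TDGeneral$ may produce many components (for instance, removing the centre of a star leaves $n-1$ isolated vertices), and the algorithm stores all of them simultaneously because the three subsequent loops each iterate over $\mathcal{C}$. With each component stored as an $n$-bit bitset, a single frame can therefore consume $\Theta(cn)$ space where $c$ is the number of components, and $c$ is not bounded by a constant. Your multiplication ``$O(n)$ per frame times $O(n)$ depth'' then no longer yields the claimed bound.

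The paper's proof isolates exactly this issue: it notes that every frame uses $O(n)$ space \emph{except} for the stored components, and then gives a short induction showing that the \emph{total} number of components stored across the entire recursion stack is at most $n$. The key observation is that if the top-level call produces $c$ components, each has at most $n-c+1$ vertices, so the recursive call beneath it (on a graph with at most $n-c$ vertices) stores at most $n-c$ further components by the inductive hypothesis, giving $c+(n-c)=n$ in total. This trade-off between branching width and remaining depth is precisely the missing ingredient in your argument.
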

\begin{proof}
  The bit-matrix representation of the input graph $G$ requires $O(n^2)$ space.

  At most
  $n+1$ calls are made to $\TDGeneral()$, and at most $n$ calls are made to
  $\TDConnected()$, since \codelineref{CallTDGeneral} of \cref{TheAlgorithm}
  passes a graph with one vertex fewer than the graph passed to $\TDConnected()$.
  Each recursive call to these functions requires $O(n)$ space, with the exception
  of the connected components found at \codelineref{MakeConnectedComponents}.  Since each
  connected component is stored in an $n$-element bitset, it remains to show that
  at most $O(n)$ connected components are stored at once.

  We prove by induction that no more than $n$ components are stored at one time.
  If $\TDGeneral()$ is called with a 1-vertex graph, then only one component
  is found, and no further components are found in recursive calls to the function.
  Now, let an $n$-vertex graph $G$ be given, and assume that $\TDGeneral()$ and its
  recursive calls create no more than $i$ components when called with any $i$-vertex
  graph ($1 \leq i \leq n$).  Let $c$ be the number of components created by
  the initial call $\TDGeneral(G, \dots)$.
  Each of these components has at most $n - c + 1$ vertices, since otherwise the
  components would have more than $n$ vertices in total.  If a component
  with $m$ vertices is passed to $\TDConnected()$, then the recursive
  call to $\TDGeneral()$ at \codelineref{CallTDGeneral} passes a graph
  with at most $m - 1 \leq n - c$ vertices.  By our inductive assumption, this
  call requires space for at most $n - c$ components, and therefore at most
  $c + n - c = n$ components in total are needed for the call to $\TDGeneral(G, \dots)$.
\end{proof}

\section{Experiments}\label{sec:experiments}

This section presents an experimental comparison with the partition-based
SAT encoding \cite{DBLP:conf/alenex/GanianLOS19,DBLP:journals/corr/abs-1911-12995} which is the existing state
of the art for the exact treedepth
problem.  We also investigate the effect
of switching off individual features of our algorithm.

The experiments were performed
on a cluster of five machines with dual Intel Xeon E5-2697A v4 CPUs and 512 GBytes
of RAM, running Ubuntu 18.04.  We implemented our algorithm in C, using Nauty version 2.6r11 for vertex-orbit symmetry breaking.
The program for the SAT encoding\footnote{\url{https://github.com/nehal73/TCW_TD_to_SAT}}
is written in Python and calls an external SAT solver;
we made the same choice as the authors of the encoding---the sequential version of Glucose 4.0 (which is based on
MiniSAT \cite{DBLP:conf/sat/EenS03}).
Our program and Glucose were compiled with GCC at optimisation level \texttt{-O3}.
Both our algorithm and the program for SAT encoding
are single-threaded.  A time limit of 1000 seconds per instance was used.  We verified that
all solvers that solved an instance within this time limit returned the same treedepth.

Following Ganian et al.\ \cite{DBLP:conf/alenex/GanianLOS19,DBLP:journals/corr/abs-1911-12995}, we used three classes of instances---famous
named graphs (many of which are regular and highly symmetrical), standard graphs (binary trees,
cliques, complete bipartite graphs, cycle graphs, path graphs, and square grids), and random graphs.

\begin{table}[htb]
\centering
 \begin{tabular}{l r r r r r r r r} 
 \toprule
 Instance & $n$ & $m$ & $\mathit{td}$ & All & $-$LB & $-$Sym & $-$Dom & SAT \\ [0.5ex] 
 \midrule
Diamond & 4 & 5 & 3 & 0.002 & 0.002 & 0.002 & 0.002 & 0.002 \\
Bull & 5 & 5 & 3 & 0.003 & 0.003 & 0.002 & 0.002 & 0.041 \\
Butterfly & 5 & 6 & 3 & 0.003 & 0.003 & 0.002 & 0.003 & 0.045 \\
Prism & 6 & 9 & 5 & 0.002 & 0.002 & 0.003 & 0.002 & 0.037 \\
Moser & 7 & 11 & 5 & 0.002 & 0.002 & 0.002 & 0.003 & 0.055 \\
Wagner & 8 & 12 & 6 & 0.002 & 0.002 & 0.002 & 0.003 & 0.067 \\
Pmin & 9 & 12 & 5 & 0.002 & 0.002 & 0.002 & 0.002 & 0.100 \\
Petersen & 10 & 15 & 6 & 0.002 & 0.002 & 0.002 & 0.002 & 0.129 \\
Goldner & 11 & 27 & 5 & 0.002 & 0.003 & 0.002 & 0.002 & 0.195 \\
Grotzsch & 11 & 20 & 7 & 0.003 & 0.003 & 0.003 & 0.002 & 0.187 \\
Herschel & 11 & 18 & 5 & 0.002 & 0.002 & 0.002 & 0.002 & 0.194 \\
Chvatal & 12 & 24 & 8 & 0.003 & 0.004 & 0.009 & 0.003 & 0.402 \\
Durer & 12 & 18 & 7 & 0.002 & 0.003 & 0.003 & 0.002 & 0.262 \\
Franklin & 12 & 18 & 7 & 0.002 & 0.002 & 0.004 & 0.003 & 0.273 \\
Frucht & 12 & 18 & 6 & 0.003 & 0.003 & 0.003 & 0.002 & 0.248 \\
Tietze & 12 & 18 & 7 & 0.003 & 0.003 & 0.003 & 0.002 & 0.266 \\
Paley13 & 13 & 39 & 10 & 0.003 & 0.005 & 0.428 & 0.003 & 2.931 \\
Poussin & 15 & 39 & 9 & 0.005 & 0.009 & 0.101 & 0.005 & 2.478 \\
Clebsch & 16 & 40 & 10 & 0.005 & 0.020 & 0.974 & 0.004 & 14.147 \\
Hoffman & 16 & 32 & 8 & 0.003 & 0.010 & 0.013 & 0.003 & 1.500 \\
Shrikhande & 16 & 48 & 11 & 0.010 & 0.027 & 13.471 & 0.010 & 51.579 \\
Sousselier & 16 & 27 & 8 & 0.004 & 0.017 & 0.017 & 0.004 & 1.263 \\
Errera & 17 & 45 & 10 & 0.007 & 0.022 & 2.486 & 0.006 & 16.225 \\
Paley17 & 17 & 68 & 14 & 0.056 & 0.072 & * & 0.052 & * \\
Pappus & 18 & 27 & 8 & 0.003 & 0.029 & 0.019 & 0.003 & 2.363 \\
Robertson & 19 & 38 & 10 & 0.018 & 0.365 & 3.441 & 0.021 & 43.926 \\
Desargues & 20 & 30 & 9 & 0.004 & 0.097 & 0.323 & 0.005 & 15.208 \\
Dodecahedron & 20 & 30 & 9 & 0.005 & 0.104 & 0.329 & 0.004 & 11.871 \\
FlowerSnark & 20 & 30 & 9 & 0.008 & 0.298 & 0.311 & 0.007 & 13.415 \\
Folkman & 20 & 40 & 9 & 0.004 & 0.056 & 0.118 & 0.007 & 10.071 \\
Brinkmann & 21 & 42 & 11 & 0.195 & 3.637 & * & 0.183 & * \\
Kittell & 23 & 63 & 12 & 0.405 & 3.094 & * & 0.559 & * \\
McGee & 24 & 36 & 11 & 0.219 & 24.042 & 344.762 & 0.175 & * \\
Nauru & 24 & 36 & 10 & 0.056 & 4.914 & 15.565 & 0.048 & 179.968 \\
Holt & 27 & 54 & 13 & 6.680 & 441.213 & * & 5.623 & * \\
WatkinsSnark & 50 & 75 & 13 & * & * & * & 870.345 & * \\
B10Cage & 70 & 105 & & * & * & * & * & * \\
Ellingham & 78 & 117 & & * & * & * & * & * \\
 \bottomrule
 \end{tabular}
 \caption{{Solving times in seconds for famous graphs.  An asterisk indicates timeout at 1000 s.}}
 \label{table:famous}
\end{table}

\subparagraph*{Famous graphs.}
Table \ref{table:famous} shows run times in seconds for famous graphs. 
The second and third columns show the number of vertices and edges in each graph, and the fourth
column shows the treedepth if it is known.  The next
four columns show run times for our algorithm; ``All'' has all features enabled, while
``$-$LB'', ``$-$Sym'', and ``$-$Dom'' have lower bounding, symmetry breaking, and domination rules
turned off respectively.  The final column shows run times for the partition-based SAT encoding
\cite{DBLP:conf/alenex/GanianLOS19,DBLP:journals/corr/abs-1911-12995}.  With all features on, our algorithm typically performs orders
of magnitude faster than the SAT encoding.  Both the symmetry breaking and lower bound
features contribute to the algorithm's performance, but on this set of benchmark instances, the domination
rule slows the algorithm down slightly.  With
the rule switched off, the algorithm closes two open instances---the Holt and Watkins Snark graphs---both
of which have treedepth 13.

\subparagraph*{Standard graphs.}
Table \ref{table:standard} shows run times in seconds for standard instances.
Again, our algorithm is typically much faster than the SAT encoding.  The
usefulness of the domination rule is demonstrated on these instances; with it
switched off, the larger clique and bipartite instances could not be solved
within the time limit.  The lower bounding and symmetry breaking features also
improve the run time on square grid graphs.

\begin{table}[htb]
\centering
 \begin{tabular}{l r r r r r r r r} 
 \toprule
 Instance & $n$ & $m$ & $\mathit{td}$ & All & $-$LB & $-$Sym & $-$Dom & SAT \\ [0.5ex] 
 \midrule
Binary tree 10 & 10 & 9 & 3 & 0.003 & 0.003 & 0.002 & 0.002 & 0.088 \\
Binary tree 20 & 20 & 19 & 4 & 0.002 & 0.003 & 0.002 & 0.002 & 0.804 \\
Binary tree 30 & 30 & 29 & 5 & 0.003 & 0.004 & 0.002 & 0.003 & 4.319 \\
Binary tree 40 & 40 & 39 & 5 & 0.002 & 0.008 & 0.002 & 0.002 & 19.021 \\
Binary tree 50 & 50 & 49 & 5 & 0.002 & 0.040 & 0.002 & 0.003 & 52.950 \\
\addlinespace[0.5em]
Clique10 & 10 & 45 & 10 & 0.003 & 0.002 & 0.003 & 0.003 & 0.003 \\
Clique20 & 20 & 190 & 20 & 0.003 & 0.003 & 0.002 & 0.879 & 0.006 \\
Clique30 & 30 & 435 & 30 & 0.003 & 0.003 & 0.003 & * & 0.220 \\
Clique40 & 40 & 780 & 40 & 0.003 & 0.004 & 0.003 & * & 0.021 \\
Clique50 & 50 & 1225 & 50 & 0.005 & 0.004 & 0.004 & * & 0.035 \\
\addlinespace[0.5em]
Complete bipartite 10 & 10 & 25 & 6 & 0.003 & 0.003 & 0.003 & 0.002 & 0.132 \\
Complete bipartite 20 & 20 & 100 & 11 & 0.002 & 0.003 & 0.003 & 0.012 & 97.689 \\
Complete bipartite 30 & 30 & 225 & 16 & 0.003 & 0.004 & 0.008 & 20.394 & * \\
Complete bipartite 40 & 40 & 400 & 21 & 0.005 & 0.005 & 0.194 & * & * \\
Complete bipartite 50 & 50 & 625 & 26 & 0.007 & 0.010 & 7.565 & * & * \\
\addlinespace[0.5em]
Cycle 10 & 10 & 10 & 5 & 0.002 & 0.003 & 0.003 & 0.003 & 0.137 \\
Cycle 20 & 20 & 20 & 6 & 0.002 & 0.004 & 0.002 & 0.003 & 2.194 \\
Cycle 30 & 30 & 30 & 6 & 0.002 & 0.007 & 0.002 & 0.002 & 16.151 \\
Cycle 40 & 40 & 40 & 7 & 0.002 & 0.409 & 0.002 & 0.002 & 67.870 \\
Cycle 50 & 50 & 50 & 7 & 0.003 & 0.762 & 0.002 & 0.002 & 236.847 \\
\addlinespace[0.5em]
Path 10 & 10 & 9 & 4 & 0.003 & 0.003 & 0.003 & 0.003 & 0.146 \\
Path 20 & 20 & 19 & 5 & 0.002 & 0.003 & 0.003 & 0.002 & 2.146 \\
Path 30 & 30 & 29 & 5 & 0.002 & 0.010 & 0.003 & 0.003 & 15.688 \\
Path 40 & 40 & 39 & 6 & 0.003 & 0.181 & 0.003 & 0.003 & 67.505 \\
Path 50 & 50 & 49 & 6 & 0.002 & 0.405 & 0.002 & 0.002 & 251.987 \\
\addlinespace[0.5em]
Square grid $2\times2$ & 4 & 4 & 3 & 0.002 & 0.003 & 0.003 & 0.003 & 0.033 \\
Square grid $3\times3$ & 9 & 12 & 5 & 0.003 & 0.003 & 0.003 & 0.002 & 0.100 \\
Square grid $4\times4$ & 16 & 24 & 7 & 0.003 & 0.004 & 0.003 & 0.002 & 0.841 \\
Square grid $5\times5$ & 25 & 40 & 9 & 0.025 & 2.219 & 0.797 & 0.026 & 17.869 \\
Square grid $6\times6$ & 36 & 60 & 11 & 1.363 & * & * & 1.619 & * \\
 \bottomrule
 \end{tabular}
 \caption{{Solving times in seconds for standard graphs.  An asterisk indicates timeout at 1000 s.}}
 \label{table:standard}
\end{table}

\subparagraph*{Random graphs.}
We generated random graphs using the Erd\H{o}s-Rényi $G(n,p)$ model, with
$n \in \{12, 16, 20\}$ vertices and edge probabilities
$p \in \{0.1, 0.2, \dots, 0.9\}$.  Ten instances were generated for each $n,p$ pair.
Our algorithm solved each of the 270 instances in less than 0.3 seconds per instance; the SAT
encoding exceeded the time limit of 1000 seconds on 53 of the 90 instances with 20 vertices.

%
%
%

\subparagraph*{Summary of experimental results.}
\Cref{fig:runtimesscatter} summarises, for all instances, the run times of our algorithm
and the SAT encoding.  Each point shows the two algorithms' run times
for a single instance.  Timeouts are shown as 1000 seconds.  For the harder instances that take more
than ten seconds to solve with the SAT encoding, our algorithm is typically around three orders
of magnitude faster.

\begin{figure}[htb]
  \centering
  \includegraphics[scale=.95]{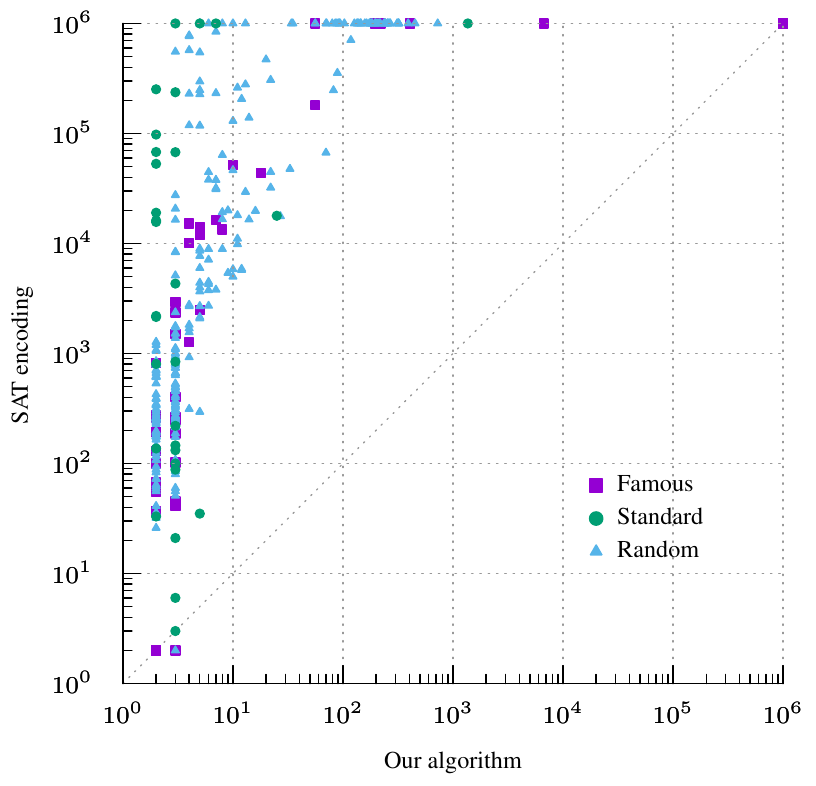}
  \caption{Run times in ms of our algorithm and the SAT encoding. Each point represents one instance.}
  \label{fig:runtimesscatter}
\end{figure}

We end this section by noting that the SAT-encoding program writes each SAT
instance it generates to disk, whereas our program performs no disk I/O while
solving.  This implementation detail contributes to the run times of the
SAT-based program.  For the famous graphs, the cost of this disk I/O
does not meaningfully affect our comparison of run times; on
the three most difficult famous instances that can be solved by the SAT-based
program (Nauru, Shrikhande and Robertson), over $95\%$ of the total run time is
spent within the SAT solver on a single unsatisfiable instance of the SAT
problem.  For some of the standard and random graphs, such as complete
bipartite graphs, a similar pattern holds.  But for other graphs in these
classes, such as binary trees, the SAT-based program spends most of its time
creating encodings and writing them to disk, and for these instances it
is likely that the program could be improved significantly by
avoiding writing to disk.

\section{Conclusion} \label{sec:conclusion}

We have introduced an algorithm for computing the exact treedepth of a graph, and shown
experimentally that it runs orders of magnitude faster than the current state of the art
on a varied set of benchmark instances.  The core of the algorithm is a
simple pair of mutually-recursive functions.
To this basic algorithm, we have added symmetry breaking, domination,
lower bounding, and vertex ordering rules.  There is room for further
improvement to each of these extensions of the algorithm.

There is also scope for improvement in finding a good upper bound on the treedepth
quickly.
SAT encodings of the problem have advantages in this regard:
modern SAT solvers implement restarts and good heuristics, both of which help to find
good solutions quickly.  Future research could combine the benefits of SAT solvers with
the techniques introduced in this paper, either by including some of the techniques
in a SAT model or by adding features such as periodic restarts to our algorithm.

\FloatBarrier

\bibliography{bib}

\appendix
\section{Proof of Correctness} \label{appendix:proof}

\begin{proposition}\label{correctnessproposition}
  The function $\TDOptimise()$ returns the treedepth of the input graph $G$.
\end{proposition}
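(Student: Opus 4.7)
The plan is to prove correctness of the basic algorithm (disregarding the shaded enhancements) by strong induction on $|V(G)|$, establishing simultaneously that $\TDGeneral(G,k,\cdot)$ returns \emph{true} iff $G$ has an elimination forest of depth at most $k$, and that $\TDConnected(G,k,\cdot)$ returns \emph{true} iff the connected nonempty graph $G$ has an elimination tree of depth at most $k$. Correctness of $\TDOptimise$ then follows immediately: it returns the least $k \geq 0$ for which $\TDGeneral(G,k,0)$ is \emph{true}, and by \Cref{sec:preliminaries} the minimum depth over all elimination forests of $G$ coincides with the treedepth of $G$.

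The induction on $|V(G)| = n$ is arranged by first proving the $\TDConnected$ case for $n$, then the $\TDGeneral$ case for $n$, with the outer inductive hypothesis giving both for all graphs of fewer than $n$ vertices. The base $n = 0$ makes every loop in $\TDGeneral$ vacuous and returns \emph{true}, matching the empty forest of depth $0$; the base $n = 1$ for $\TDConnected$ (with $k \geq 1$) returns \emph{true} at \linerangeref{SingleVertexGraph}{EndSingleVertexGraph}, matching the unique depth-$1$ tree on a single vertex. For the inductive step of $\TDConnected$ with $n > 1$, every recursive call goes to $\TDGeneral(G - v, k - 1, v)$ on an $(n-1)$-vertex graph and so is correct by IH; the equivalence between the value returned and the existence of a depth-$\leq k$ elimination tree of $G$ is then the recursive definition of elimination tree from \Cref{sec:preliminaries}, identifying the chosen $v$ with the root and the subtrees with an elimination forest of $G - v$ of depth $\leq k - 1$. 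For the inductive step of $\TDGeneral$, the corner case $k = 0$, $|V(G)| > 0$ is handled at \codelineref{TDGeneralK0}; otherwise $G$ is split into its connected components, and each call at \codelineref{LoopToCallTDConnected} is correct either by the outer IH (if the component has fewer than $n$ vertices) or by the $\TDConnected$ case for $n$ established just above (if $G$ is already connected). An elimination forest of $G$ is precisely a disjoint union of elimination trees, one per component, so the loop returns the correct answer.

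It then remains to verify that each shaded enhancement preserves correctness. For the bounding checks at \linerangeref{StartLowerBounds}{EndLowerBounds} I would argue that each of the two bounding routines of \Cref{sec:extrafeatures} returns a value never exceeding the treedepth of the input component, so short-circuiting with \emph{false} when either exceeds $k$ is sound. For each of the three symmetry-breaking and domination rules applied at \codelineref{VLoop}, the task is to show that whenever some choice of $v$ would lead to an elimination tree of $G$ of depth $\leq k$, at least one such choice is still visited: for vertex orbits this follows because $G - v$ and $G - v'$ are isomorphic whenever $v, v'$ lie in the same orbit; for vertex domination it follows because $G - v'$ is isomorphic to a subgraph of $G - v$, so any depth bound attainable from root $v$ is attainable from root $v'$; and for the only-child rule it follows directly from \Cref{lemma1}.

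The main obstacle I anticipate is setting up the mutual induction cleanly, since $\TDGeneral$ on a connected input invokes $\TDConnected$ on a graph of the same size; the ordering described above (first $\TDConnected$ at size $n$, then $\TDGeneral$ at size $n$) resolves this, because every recursion from $\TDConnected$ strictly decreases $|V(G)|$. Once that scaffolding is in place, every remaining step is a direct appeal to the recursive definitions of elimination tree and elimination forest, to the soundness of the two lower bounds, or to \Cref{lemma1}, so I would expect no further complications.
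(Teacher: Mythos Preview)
Your argument is correct and follows essentially the same route as the paper's proof: a mutual induction on $|V(G)|$ that establishes correctness of $\TDConnected$ at size $n$ first and then uses it for the connected case of $\TDGeneral$ at the same size, with $\TDOptimise$'s correctness following immediately. You in fact go further than the paper, which proves only the basic algorithm; your additional paragraph on the soundness of the lower bounds and the three symmetry/domination rules is not in the paper's own proof but is a welcome (and correct) completion.
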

\begin{proof}
If $G$ is the empty graph, the last line of $\TDGeneral()$ correctly returns $\AlgVar{true}$, since
  the graph has no connected components.

  For non-empty graphs, we prove by induction on $|V(G)|$ that $\TDGeneral()$ and $\TDConnected()$
  give correct results.  In the base case, consider calls to each function with a single-vertex graph.
  The function $\TDConnected()$ correctly returns $\AlgVar{true}$ at \codelineref{EndSingleVertexGraph}.
  The function $\TDGeneral()$ returns $\AlgVar{false}$ at \codelineref{TDGeneralK0} if $k=0$; otherwise
  $\TDConnected()$ is called for the single-vertex component and
  $\AlgVar{true}$ is returned.

  In the inductive case, assume that $\TDGeneral()$ and $\TDConnected()$ give correct results for
  graphs with fewer than $|V(G)|$ vertices.  
  The function $\TDConnected()$ is correct, since the loop beginning at \codelineref{VLoop}
  returns $\AlgVar{true}$ if and only if 
  there is there is some vertex $v$ whose removal leaves a graph which has an elimination forest of
  depth at most $k-1$.
  To show the correctness of $\TDGeneral()$, we consider three cases.  If $k=0$, \codelineref{TDGeneralK0} returns
  the correct answer.  Otherwise, if $G$ is disconnected, $\AlgVar{true}$ is returned if and only
  if an elimination tree of depth no more than $k$ can be found for each connected component on
  \codelineref{LoopToCallTDConnected}.
  Since each component has fewer than $|V(G)|$ vertices, the results from $\TDConnected()$ are correct
  by our inductive assumption.
  In the final case, if $G$ is connected, the function returns the result of
  $\TDConnected(G,k,\dots)$, which we have already shown to be correct.

  Given the correctness of the first two functions, the function $\TDOptimise()$ finds,
  as required, the lowest value $k$ such that $G$ has an elimination forest of depth $k$.
\end{proof}

\end{document}